\documentclass[preprint,12pt]{imsart}

\RequirePackage[OT1]{fontenc}
\RequirePackage{amsthm,amsmath,subcaption,caption}
\RequirePackage[numbers]{natbib}
\RequirePackage[colorlinks,citecolor=blue,urlcolor=blue]{hyperref}
\usepackage{pythonhighlight}

\usepackage{tkz-tab}
\usepackage{amscd,amsfonts,amssymb,amsmath,latexsym,array,hhline,xcolor,graphicx, caption}
\usepackage{float}
\usepackage{appendix}
\usepackage{booktabs}
\usepackage{caption}
\usepackage{epstopdf}

\newcommand\cF{{\cal F}}

\newcommand\cM{{\cal M}}

\newcommand\R{{\bf R}}

\newcommand\Z{{\bf Z}}

\newcommand\N{\mathbb{N}}

\newtheorem{theo}{Theorem}[section]

\newtheorem{lemm}[theo]{Lemma}

\newtheorem{coro}[theo]{Corollary}

\newcommand\beq{\begin{equation}}
\newcommand\eeq{\end{equation}}
\newcommand\beqa{\begin{equation*}}
\newcommand\eeqa{\end{equation*}}
\newcommand\bea{\begin{eqnarray}}
\newcommand\eea{\end{eqnarray}}
\newcommand\bean{\begin{eqnarray*}}
\newcommand\eean{\end{eqnarray*}}

\DeclareMathOperator{\essinf}{ess\;inf}
\DeclareMathOperator{\esssup}{ess\;sup}

\DeclareMathOperator{\supp}{supp}


\pagestyle{plain}

\begin{document}

\begin{frontmatter}

\title{A short note on super-hedging an arbitrary number of European options with integer-valued strategies }

\author[A1]{ Dorsaf Cherif,}
\author[A2]{ Meriam EL MANSOUR, }

\author[A3]{ Emmanuel LEPINETTE}

\address[A1]{ Faculty of Sciences of Tunis, Tunisia.\\
Email: dorsaf-cherif@hotmail.fr
}

\address[A2]{ Ceremade, UMR  CNRS 7534,  Paris Dauphine University, PSL National Research, Place du Mar\'echal De Lattre De Tassigny, \\
75775 Paris cedex 16, France and\\
 Faculty of Sciences of Tunis, Tunisia.\\
Email: elmansour@ceremade.dauphine.fr
}
\address[A3]{ Ceremade, UMR  CNRS 7534,  Paris Dauphine University, PSL National Research, Place du Mar\'echal De Lattre De Tassigny, \\
75775 Paris cedex 16, France and\\
Gosaef, Faculty of Sciences of Tunis, Tunisia.\\
Email: emmanuel.lepinette@ceremade.dauphine.fr\\ \bigskip
}

\begin{abstract} The usual theory of asset pricing in finance assumes that the financial strategies, i.e. the quantity of risky assets to invest, are real-valued so that they are  not integer-valued in general, see the Black and Scholes model for instance. This is clearly contrary to what it is possible to do in the real world. Surprisingly, it seems that there is no many contributions in that direction in the literature, except for a finite number of states. In this paper, for arbitrary $\Omega$, we show that, in discrete-time, it is possible to evaluate the minimal super-hedging price  when we restrict ourselves to integer-valued strategies. To do so, we only consider terminal claims that are continuous piecewise  affine functions of the underlying asset. We formulate a dynamic programming principle that can be directly implemented on an historical data and which also provides  the optimal integer-valued strategy. The problem with general payoffs remains open but should be solved with the same approach.
 \end{abstract}

\begin{keyword} Super-hedging prices--European options--Integer-valued strategies--Optimal super-hedging prices--Dynamic programming principle.

\end{keyword}

\noindent {\bf MSC:}{ \rm 60-08, 60H30, 90-05, 91-05, 91-10, 91-G15}

\end{frontmatter}

\section{Introduction}

The problem of super-hedging a European claim, such as a call option, is very classical in mathematical finance but it has only been  solved for real-valued strategies so that the optimal strategy, corresponding to the minimal hedging or super-hedging price, is not integer-valued contrary to what it is allowed to do in the real life. This is why we propose  to solve the problem  of super-hedging a European claim with  integer-valued financial strategies.

Actually, the main contribution in the literature for integer-valued financial strategies is the paper \cite{GK} where a finite set of states $\Omega$ is considered. In this setting, the authors adapt the usual theory for the real-valued strategies to the integer-valued ones, i.e. they introduce a no-arbitrage condition which is equivalent to the existence of a risk-neutral probability measure and show that the minimal super-hedging price may be characterized through the martingale measures similarly to the usual case with real-valued strategies. See also \cite{DLW} where $\Omega$ is finite.  The general case with an arbitrary state space $\Omega$ is still an open problem. Also, portfolio optimization problems of Markowitz type are considered in \cite{BI}, \cite{BL}, \cite{BT}.

Let us recall that the usual approach of pricing assumes that the financial market model satisfies a no-arbitrage condition NA, which is equivalent to the existence of a risk-neutral probability measure $Q$ under which the discounted asset prices are martingales, see the Dalang-Morton-Willinger theorem \cite{DMW}. Under NA, we may show that there exists a minimal super-hedging price $P_0^*(\xi_T)$ for the European claim $\xi_T\ge 0$ given by 
\bea P_0^*(\xi_T)=\sup_{Q\in \cM(P)}E_Q(\xi_T),\label{MP}\eea
where $\cM(P)$ is the set of all risk-neutral probability measures equivalent to the initial probability measure $P$ of the model. Here, we suppose that the risk-free interest rate  of the model is $r=0$. Recall that the formula above holds in discrete time but also in continuous time with extra-conditions on the model. Indeed,  the no-arbitrage condition needs to be strenghtened and it   is only equivalent to the existence of $Q\sim P$ under which discounted asset prices are local martingales, see \cite{DS1}, \cite{DS2}, \cite{DS3}.

In any case, the optimal strategy that achieves the minimal super-hedging price (\ref{MP}) is not, in general, integer-valued. The typical example is the continuous-time Black and Scholes model where the so-called delta-hedging strategy for the European call is explicit and lies in the set $[0,1]\setminus \{0,1\}$, see \cite{BS}.

Clearly, a new approach is necessary to compute the super-hedging prices for only integer-valued financial strategies. We follow the ideas developed in \cite{CL} where the problem is initially solved without any no-arbitrage conditions. Then, a no-arbitrage condition AIP naturally appears and means that the infimum price of the zero claim (non negative claims  more generally) is not $-\infty$. This condition is clearly necessary for numerical computations. Actually,  it is shown that AIP is equivalent to the property that the infimum super-hedging price of any non negative claim is non negative, as observed in the real markets. In our paper, we do not explicitly suppose such a no-arbitrage condition but the form of the conditional supports of the asset price  we assume implies this condition for the model with real-valued strategies. 

Our paper is a first attempt to compute super-hedging prices with only integer-valued financial strategies. We restrict ourselves to payoffs which are piecewise affine functions of the underlying asset and we assume specific conditional supports for the asset prices. The problem for general payoff functions remains open but we conjecture that our proposed method can be adapted with more technicalities. Also, problems such as characterizations of the no-arbitrage condition NA with only integer-valued strategies or generalization of our work to arbitrary conditional supports of the asset prices remains open if $\Omega$ is not finite. \bigskip

\noindent {\bf Notations} If $A\subset \R$ is a random subset of $\R$ and $\cF$ is a $\sigma$-algebra, we denote by $\mathbb{L}^0(A,\cF)$ the family of all $\cF$-measurable random variables $X$ on the probability space $(\Omega,\cF,P)$ such that $X\in A$ a.s..

\section{The super-hedging problem:}
We consider $n\geq 1$ options that we want to super-replicate in discrete time $t=\lbrace 0,...T \rbrace .$
Let $(\Omega,( \mathcal{F}_t)_{0\leq t\leq T})$ be a stochastic basis where $( \mathcal{F}_t)_{0\leq t\leq T}$ is supposed to be complete.
We consider a financial market model composed of two assets. We suppose, without loss of generality, that the risk-free asset  is $S_t^0=1$ for all $t \in \lbrace 0,...T \rbrace,$ while the risky asset price is described by a stochastic process $S=(S_t)_{0\leq t\leq T}.$ Recall that a self-financing portfolio process $(V_t)_{0\leq t\leq T}$ satisfies by definition:
$$ \Delta V_t= V_t - V_{t-1}= \theta_{t-1} \Delta S_t,\quad  t=1,\cdots,T,$$ where  $\theta_{t-1}$ is $\cF_{t-1}$-measurable and represents the number of risky assets of the portfolio.

In this paper, we consider European options whose payoffs are of the form $\xi_T=g(S_T) \in \mathbb{L}^0(\R,\mathcal{F}_T),$ where $g$ is a continuous piecewise affine function. The typical example is the European  call  option, i.e. $g(x)=(x-K)^+$, $K>0$.
Our goal is to compute the set of all super-hedging prices of $\xi_T$, i.e the set of all $V_0$,  initial values of  self-financing portfolio processes $(V_t)_{0 \leq t \leq T}$, such that $V_T \geq \xi_T$ almost surely.
Contrarily to what it is usual to do in the literature, we restrict ourselves to the case of integer-valued strategies, i.e $\theta_t \in \Z$ almost surely, for all $ t \in \lbrace 0,...T \rbrace,$ where $\Z=\N\cup(-\N)$ and $\N$ is the set of all non negative integers.  In the case of super-hedging an arbitrary number of options $n \geq 1,$ the problem reads as $ V_T \geq n \xi_T, \text{~a.s.}$ and it is clearly interesting to analyse the impact of $n$ on the strategies and the infimum prices, as linearity is not necessarily preserved with respect to the quantity $n$ of claims. 

To solve this problem, we follow the approach of \cite{CL} and \cite{BCL} that we adapt to integer-valued strategies. To do so, we first solve backwardly the super-hedging problem between two dates $t-1$ and $t,$ and we show that the procedure may be propagated backwardly as the minimal super-hedging price we obtain at time $t-1$ is still a continuous piecewise affine payoff function of the underlying asset.
It  is then possible to deduce the minimal super-hedging price at time $t=0$ by induction.

\section{The  super-hedging problem.}\label{section3}
Let $ t \leq T $ and $g_t$ be a continuous piecewise affine function, i.e. there exists a subdivision  $0=a_0 < a_1<...<a_{n-1}<a_n = \infty$ of $[0,\infty]$ such that $g_t$ is an affine function on the interval $ [a_i,a_{i+1}), \forall i \in \lbrace 0,...,n-1 \rbrace.$ As the asset prices are non negative, we suppose without loss of generality that $a_0=0.$
We first solve the  one step problem: find $V_{t-1}$ and the strategy $\theta_{t-1}\in \mathbb{L}^0(\R,\cF_{t-1})$ such that:

$$V_{t-1} + \theta_{t-1} \Delta S_t \geq  g_t(S_t),\,{\rm a.s..}$$
This is equivalent to: 

\bean  V_{t-1} &\geq&  g_t(S_t)-\theta_{t-1} \Delta S_t, \\
\Leftrightarrow  V_{t-1} &\geq & g_t(S_t)-\theta_{t-1} S_t+ \theta_{t-1} S_{t-1},\\
\Leftrightarrow  V_{t-1} &\geq &  \esssup_{\mathcal{F}_{t-1}}(  g_t(S_t)-\theta_{t-1} S_t )  + \theta_{t-1} S_{t-1}.
\eean

Recall that the condition essentiel supremum $\esssup_{\mathcal{F}_{t-1}}(\Gamma)$, for a familly of random variables $\Gamma$,  is the smallest random variable that dominates a.s. any $\gamma\in \Gamma$, see the definition in \cite{CL}. Equivalently, we may show that:
\begin{equation}
 V_{t-1} \geq V_{t-1}(\theta_{t-1}) := \underset{x \in \supp_{\mathcal{F}_{t-1}}(S_t)} \sup(  g_t(x)-\theta_{t-1} x )  + \theta_{t-1} S_{t-1}, 
\end{equation}
where $\supp_{\mathcal{F}_{t-1}}(S_t)$ is the conditional support of $S_t$ knowing $\mathcal{F}_{t-1}$, see \cite{CL} and \cite{EL} for the definition and the proof of the inequality above. \smallskip

In the following, we suppose that there exist two deterministic numbers $k_{t-1}^d \in (0,1)$ and $k_{t-1}^u \in (1, \infty)$ such that $\supp_{\mathcal{F}_{t-1}}(S_t)=[k_{t-1}^d S_{t-1}, k_{t-1}^u S_{t-1}].$ This model may be seen as a generalization of the Binomial model and the conditions imposed on the coefficients $k_{t-1}^d$ and $k_{t-1}^u$ are equivalent to the weak no-arbitrage condition AIP, see \cite{CL}. In particular, we have:
$$V_{t-1}(\theta_{t-1})= \underset{x \in [k_{t-1}^d S_{t-1}, k_{t-1}^u S_{t-1}]} \sup(  g_t(x)-\theta_{t-1} x ) + \theta_{t-1} S_{t-1}\in \mathbb{L}^0(\R,\cF_{t-1}). $$  
We define $V_{t-1}^*$ as the infimum of all  super-hedging prices at time $t-1$ over all integer-valued strategies in $\Z$, i.e.

$$V_{t-1}^*:= \underset{\theta_{t-1} \in \mathbb{L}^0(\Z,\mathcal{F}_{t-1})} \essinf V_{t-1}(\theta_{t-1}).$$

\begin{lemm}
We have 

 \begin{equation}V_{t-1}^*=\underset{\theta \in \Z} \inf V_{t-1}(\theta).\label{2}
\end{equation}
\end{lemm}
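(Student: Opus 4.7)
The plan is to prove the two inequalities separately, noting first that both sides of the claimed identity are a priori $\mathcal{F}_{t-1}$-measurable: the right-hand side is a countable infimum of $\mathcal{F}_{t-1}$-measurable random variables (since $\Z$ is countable and $V_{t-1}(\theta)$ is $\mathcal{F}_{t-1}$-measurable for each constant $\theta$), while the left-hand side is an essential infimum by construction.

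The easy inequality $V_{t-1}^* \leq \inf_{\theta \in \Z} V_{t-1}(\theta)$ follows because every constant $\theta \in \Z$ belongs to $\mathbb{L}^0(\Z, \mathcal{F}_{t-1})$. By definition of the essinf, $V_{t-1}^* \leq V_{t-1}(\theta)$ a.s.\ for each such $\theta$; taking the countable infimum over $\theta \in \Z$ preserves the inequality almost surely.

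For the converse, I would use the explicit expression of $V_{t-1}(\cdot)(\omega)$ as a deterministic function of $(\theta,\omega)$. Writing
$$\Phi(\theta,\omega) := \sup_{x \in [k_{t-1}^d S_{t-1}(\omega),\, k_{t-1}^u S_{t-1}(\omega)]} \bigl(g_t(x) - \theta x\bigr) + \theta S_{t-1}(\omega),$$
we have $V_{t-1}(\theta_{t-1})(\omega) = \Phi(\theta_{t-1}(\omega), \omega)$ for every $\theta_{t-1} \in \mathbb{L}^0(\Z, \mathcal{F}_{t-1})$ and almost every $\omega$. Since $\theta_{t-1}(\omega) \in \Z$ a.s., one has pointwise $V_{t-1}(\theta_{t-1})(\omega) \geq \inf_{\theta \in \Z} \Phi(\theta,\omega) = \bigl(\inf_{\theta \in \Z} V_{t-1}(\theta)\bigr)(\omega)$. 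Thus $\inf_{\theta \in \Z} V_{t-1}(\theta)$ is an a.s.\ lower bound for the family $\{V_{t-1}(\theta_{t-1}) : \theta_{t-1} \in \mathbb{L}^0(\Z, \mathcal{F}_{t-1})\}$, and since $V_{t-1}^*$ is, by definition, the largest such $\mathcal{F}_{t-1}$-measurable lower bound, we conclude $\inf_{\theta \in \Z} V_{t-1}(\theta) \leq V_{t-1}^*$ a.s.

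No real technical obstacle is expected. The essential point is that the optimization decouples across $\omega$: for each fixed $\omega$, $\theta \mapsto \Phi(\theta,\omega)$ is an ordinary function on $\Z$, so randomizing the integer strategy $\mathcal{F}_{t-1}$-measurably cannot improve upon the $\omega$-wise optimum; countability of $\Z$ guarantees measurability of the pointwise infimum and obviates any measurable-selection argument that an uncountable parameter set would otherwise force us to invoke.
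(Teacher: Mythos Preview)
Your proof is correct and follows essentially the same route as the paper: both establish $V_{t-1}^*\le\inf_{\theta\in\Z}V_{t-1}(\theta)$ by taking constant integers, and for the reverse inequality both exploit that $V_{t-1}(\theta_{t-1})(\omega)$ depends on $\theta_{t-1}$ only through its value $\theta_{t-1}(\omega)\in\Z$, so the pointwise infimum over $\Z$ is a lower bound. The paper makes this last step explicit via the decomposition $\theta_{t-1}=\sum_{\theta\in\Z}\theta\,1_{\{\theta_{t-1}=\theta\}}$, whereas you phrase it as $\Phi(\theta_{t-1}(\omega),\omega)\ge\inf_{\theta\in\Z}\Phi(\theta,\omega)$; these are the same argument.
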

\begin{proof}

 Let us define $\gamma=\underset{\theta \in \Z} \inf V_{t-1}(\theta)\in L^0(\R,\cF_{t-1})$, see \cite{CL}. As $ V_{t-1}^* \leq V_{t-1}(\theta),$ for all $\theta \in \Z$, we get that  $V_{t-1}^*\leq \underset{\theta \in \Z} \inf  V_{t-1}(\theta)=\gamma.$ On the other hand, if $\theta_{t-1} \in \mathbb{L}^0(\Z, \mathcal{F}_{t-1}),$ then:
\bean \theta_{t-1}&=&\underset{ \theta \in Z} \sum \theta 1_{\lbrace \theta_{t-1} = \theta  \rbrace },\\
 V_{t-1}(\theta_{t-1})&=&\underset{ \theta \in Z} \sum  V_{t-1}(\theta) 1_{\lbrace \theta_{t-1} = \theta  \rbrace }\ge \underset{ \theta \in Z} \sum  \gamma 1_{\lbrace \theta_{t-1} = \theta  \rbrace }=\gamma.\eean
We deduce that $V_{t-1}^*\ge \gamma$ and the conclusion follows.
\end{proof}

In the following, we first solve the super-hedging problem between two dates.
\begin{theo}[One step problem]\label{theoreme principal}
Let us consider $t \in \lbrace 	1,...,T \rbrace$ and   suppose that $\xi_t=g_t(S_t)$ where $g_t$ is a continuous piecewise affine function. Moreover, we assume   that there exists  two deterministic numbers $k_{t-1}^d \in (0,1)$ and $k_{t-1}^u \in (1, \infty)$ such that $$\supp_{\mathcal{F}_{t-1}}(S_t)=[k_{t-1}^d S_{t-1}, k_{t-1}^u S_{t-1}].$$
Then, $V^*_{t-1}=g_{t-1}(S_{t-1})$ where $g_{t-1}$  is a continuous piecewise linear function. 
\end{theo}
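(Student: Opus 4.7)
By Lemma 3.1, I have
\[
V^*_{t-1} = \inf_{\theta \in \Z} V_{t-1}(\theta), \qquad V_{t-1}(\theta) = \theta S_{t-1} + \sup_{x \in [k^d_{t-1} S_{t-1},\, k^u_{t-1} S_{t-1}]}\!\big(g_t(x) - \theta x\big).
\]
The plan is to prove the claim in two steps: (i) for each fixed integer $\theta$, show that $s \mapsto V_{t-1}(\theta)\big|_{S_{t-1}=s}$ is a continuous piecewise affine function of $s \in [0,\infty)$; (ii) reduce the integer infimum to a minimum over a finite set $I \subset \Z$ that does not depend on $s$. Once (i) and (ii) are in place, $g_{t-1}(s) := \min_{\theta \in I} V_{t-1}(\theta)\big|_{S_{t-1}=s}$ is continuous piecewise affine (the min of finitely many such functions is continuous piecewise affine, via a common refinement of breakpoints), and $V^*_{t-1} = g_{t-1}(S_{t-1})$.

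For step (i), the key observation is that $x \mapsto g_t(x) - \theta x$ is continuous piecewise affine with the same breakpoints $a_0 < a_1 < \cdots < a_{n-1}$ as $g_t$, so its supremum over the compact interval $[k^d s, k^u s]$ is attained at one of the endpoints or at a breakpoint $a_i$ inside the interval. Hence
\[
\sup_{x \in [k^d s, k^u s]}\!\big(g_t(x) - \theta x\big) = \max\!\Big( g_t(k^d s) - \theta k^d s,\; g_t(k^u s) - \theta k^u s,\; \max_{a_i \in (k^d s,\, k^u s)}\!\big(g_t(a_i) - \theta a_i\big) \Big).
\]
The endpoint terms are continuous piecewise affine in $s$ (composition of an affine rescaling with $g_t$); each breakpoint term is a constant, active only on $s \in (a_i/k^u, a_i/k^d)$. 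Partitioning $[0,\infty)$ by the finite set $\{a_i/k^u, a_i/k^d : 1 \le i \le n-1\}$, on each sub-interval the active family of candidates is fixed, so the max is continuous piecewise affine. Continuity across transition points is automatic because, for instance, at $s = a_i/k^u$ we have $k^u s = a_i$, so the right-endpoint term coincides with the $a_i$-term. Adding $\theta s$ preserves the structure.

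For step (ii), for fixed $s > 0$ the function $\theta \mapsto V_{t-1}(\theta)$ is convex, being the sum of a linear term and a supremum of affine functions of $\theta$. Asymptotically: for $\theta \to +\infty$ the inner sup is attained at $x = k^d s$ (since the slopes $\alpha_i - \theta$ of $g_t(x) - \theta x$ become uniformly negative), giving $V_{t-1}(\theta) \sim \theta s(1 - k^d) \to +\infty$ because $k^d < 1$; for $\theta \to -\infty$ it is attained at $x = k^u s$, giving $V_{t-1}(\theta) \sim \theta s(1 - k^u) \to +\infty$ because $k^u > 1$. More precisely, denoting by $\alpha_0,\dots,\alpha_{n-1}$ the slopes of the affine pieces of $g_t$, for any integer $\theta \geq \lceil \max_i \alpha_i \rceil$ the sup is always attained at the left endpoint $k^d s$, and $V_{t-1}(\theta)$ is then affine and increasing in $\theta$; symmetrically for $\theta \leq \lfloor \min_i \alpha_i \rfloor$. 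Therefore the infimum over $\Z$ equals the minimum over the finite set $I := \Z \cap [\lfloor \min_i \alpha_i \rfloor,\, \lceil \max_i \alpha_i \rceil]$, uniformly in $s$.

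The main obstacle I expect is exactly this uniform control in step (ii): one must verify that a single finite window of integer strategies suffices for all values of $s \in [0,\infty)$, which is where the fixed slope bounds of $g_t$ enter crucially. Step (i) is a bookkeeping argument on piecewise affine functions; once step (ii) is secured, the conclusion follows immediately by taking the minimum of the finitely many continuous piecewise affine functions produced in step (i).
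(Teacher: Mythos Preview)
Your proposal is correct and follows essentially the same approach as the paper: both reduce the supremum over $x\in[k^d s,k^u s]$ to a finite maximum over endpoints and interior breakpoints (the paper packages these as $x_i(s)=(k^d s\vee a_i)\wedge k^u s$), and both bound the relevant integer strategies by a finite window determined by the slopes of $g_t$ (you use $[\lfloor\min_i\alpha_i\rfloor,\lceil\max_i\alpha_i\rceil]$, the paper uses $[-L_t-1,L_t+1]$ via the Lipschitz constant, which is the same thing). Your treatment of step~(i), in particular the continuity across the transition points $a_i/k^u$ and $a_i/k^d$, is in fact more explicit than the paper's.
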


\begin{proof}
By assumption, there exist a subdivision $(a_i)_{i=0,\cdots,n}$ of $[0, \infty]$, with $a_0=0<a_1<...<a_{n-1}<a_n=\infty$, such that $g_{t}$ is an affine function on each interval. Let us define  
$$x_i(S_{t-1}) =(k_{t-1}^d S_{t-1} \vee a_i)\wedge k_{t-1}^u S_{t-1},\quad  i = 0,\cdots,n. $$
It is straightforward that
$$ V_{t-1}(\theta_{t-1})=\underset{i = 0,\cdots,n } \sup \left[ g_t(x_i(S_{t-1}))-\theta_{t-1}x_i(S_{t-1}) \right] + \theta_{t-1}S_{t-1}.$$
Note  that $x_0(S_{t-1})=k^d S_{t-1}$ and $x_n(S_{t-1})=k^u S_{t-1}$  and some terms of the sequence $(x_i)_i$ may coincide. Let us define the functions
$$ h^i(\theta_{t-1},S_{t-1})=g_t(x_i(S_{t-1}))
+\theta_{t-1}(S_{t-1}-x_i(S_{t-1})),\, i=0,\cdots,n.$$

The slopes of the affine functions $ \theta_{t-1} \mapsto h^i(\theta_{t-1},S_{t-1})$
are  given by the non decreasing sequence $(S_{t-1}-x_i(S_{t-1}))_{i = n,n-1,\cdots,0}$ such that $S_{t-1}-x_n(S_{t-1}) < 0$ and $S_{t-1}-x_0(S_{t-1}) > 0.$

By ordering the indices in the decreasing order, we obtain $(n+1)$ affine functions $\theta_{t-1} \mapsto h^i(\theta_{t-1})$ for $i \in \lbrace n,n-1,...,1,0 \rbrace$ with increasing slopes $(S_{t-1}-x_i)_{i \in \lbrace n,n-1,...,1,0 \rbrace},$ such that: $S_{t-1}-x_n < 0$ and $S_{t-1}-x_0  > 0.$ Therefore, the mapping  $V_{t-1}:\theta_{t-1} \mapsto \underset {i = n,...,0 } \sup h^i(\theta_{t-1}, S_{t-1})$ is a piecewise affine function, i.e.  there  exists a subdivision:
 $$ -\infty=\alpha_0 <\alpha_1(S_{t-1})  \leq ...\leq \alpha_{m-1}(S_{t-1}) <\alpha_m=\infty  $$
  such that $V_{t-1}$ is an affine function of $\theta_{t-1}$ on each interval $[\alpha_i(S_{t-1}), \alpha_{i+1}(S_{t-1})]$,\quad $i=0,\cdots,m-1$.  Note  that the function $V_{t-1}$ is convex in $\theta_{t-1}$ and the elements of the partition define the intersection points between two distinct and successive graphs of the affine functions $h^{i+1}(\theta_{t-1}),h^{i}(\theta_{t-1})$.  So, there exists $\theta_{t-1}^* \in [ \alpha_1(S_{t-1}) -1,...,\alpha_{m-1}(S_{t-1})+1 ] \cap \Z$ such that:
$$\underset{\theta_{t-1} \in \Z} \inf V_{t-1}(\theta_{t-1})=V_{t-1}(\theta_{t-1}^*).$$
It remains to evaluate $\alpha_1(S_{t-1})$ and $\alpha_{m-1}(S_{t-1})$. To do so, let us  solve the  equations $h^i(\alpha)=h^j(\alpha)$, $i,j = 0,...,m$ and  $x_i \neq x_j$. Since we suppose that  $x_i-x_j \neq 0,$ we get that 
$$\alpha= \frac{g_t(x_i)-g_t(x_j)}{x_i-x_j}. $$ We deduce that $|\alpha|\leq L_{t}$ where  $L_{t} >0$ is  a Lipschitz constant of the piecewise affine function $g_t$. We deduce that  $\theta_{t-1}^* \in [-L_{t}-1,L_t+1] \cap \Z$ and $$g_{t-1}(S_{t-1})=V_{t-1}(\theta_{t-1}^*(S_{t-1}))=\underset{\theta_{t-1} \in [-L_{t}-1,L_{t}+1] \cap \Z} \min~ \underset{i =0,\cdots,n } \sup h^i(\theta_{t-1},S_{t-1}).$$
We conclude that $g_{t-1}$ is a continuous piecewise affine function as a  minimum of  a finite number of continuous piecewise affine functions. 
\end{proof}

\begin{coro}(The multi-period super-hedging problem)\label{coroMPS}
Suppose that, at time $T>0$, the payoff is $\xi_T=g_T(S_T)$ where $g_T$ is a continuous piecewise  affine function. Moreover, we assume   that there exists   deterministic numbers $k_{t-1}^d \in (0,1)$ and $k_{t-1}^u \in (1, \infty)$ for each $t=1,\cdots,T$ such that we have $\supp_{\mathcal{F}_{t-1}}(S_t)=[k_{t-1}^d S_{t-1}, k_{t-1}^u S_{t-1}].$ Then,  there exists a minimal super-hedging portfolio process $(V^*_{t})_{t=0,\cdots,T}$ such that $V_T^*\ge \xi_T$. We have  $V^*_{t}=g(t,S_{t})$ where $g(t,\cdot)$  is a continuous piecewise affine function given by 
\bean g(t,s)&=&\underset{\theta \in [-L_{t+1}-1,L_{t+1}+1] \cap \Z} \min~ \underset{i =0,\cdots,n^{(t+1)} } \sup \left( g(t+1,x_i^{(t+1)}(s))
+\theta (s-x_i^{(t+1)}(s)) \right),\\
x_i^{(t+1)}(s) &=&(k_{t}^d s \vee a_i^{(t+1)})\wedge k_{t}^u s,\quad  i = 0,\cdots,n^{(t+1)},
\eean 
where $L_{t+1}$ is any Lipschitz constant of the function $g(t+1,\cdot)$ and $(a_i^{(t+1)})_{i=0,\cdots,n^{(t+1)}}$ is any partition such that $a_0^{(t+1)}=0<a_1^{(t+1)}<...<a_{n^{(t+1)}-1}^{(t+1)}<a_{n^{(t+1)}}^{(t+1)}=\infty$ and $g(t+1,\cdot)$ is an affine function on  $[a_i^{(t+1)}, a_{i+1}^{(t+1)} ]$, $i\le n^{(t+1)}-1$. The associated super-hedging strategy $\theta^*$  is given by the argmin of the minimisation problem defining $g(t,\cdot)$ in the expression above.

\end{coro}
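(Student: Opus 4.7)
The natural approach is backward induction on $t$ from $T$ down to $0$, with Theorem \ref{theoreme principal} serving as the one-step engine. At $t=T$, set $g(T,\cdot):=g_T$ and $V^*_T:=\xi_T$; this is continuous piecewise affine by hypothesis, which provides the base case. For the inductive step, suppose that at date $t+1\le T$ one already has $V^*_{t+1}=g(t+1,S_{t+1})$ for some continuous piecewise affine $g(t+1,\cdot)$. Treat $g(t+1,S_{t+1})$ as the terminal payoff to super-replicate in a single step between dates $t$ and $t+1$: since $\supp_{\cF_t}(S_{t+1})=[k_t^d S_t,k_t^u S_t]$, Theorem \ref{theoreme principal} delivers $V^*_t=g(t,S_t)$ with $g(t,\cdot)$ continuous piecewise affine and an integer-valued minimizer $\theta^*_t\in[-L_{t+1}-1,L_{t+1}+1]\cap\Z$. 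Reading off the proof of Theorem \ref{theoreme principal} with payoff $g(t+1,\cdot)$ and its partition $(a_i^{(t+1)})$ yields precisely the explicit formulas for $g(t,\cdot)$ and $x_i^{(t+1)}$ stated in the corollary.

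To upgrade this local argument into a genuine super-hedging result, I would next prove the dynamic programming principle: for any self-financing portfolio $(V_t)$ with integer-valued $(\theta_t)$ and $V_T\ge\xi_T$, one has $V_t\ge g(t,S_t)$ a.s.\ at every date, by descending induction. The inductive step uses the one-step inequality derived at the start of Section \ref{section3}, which gives $V_t\ge V_t(\theta_t)\ge V^*_t=g(t,S_t)$ via the preceding lemma. Conversely, rolling the optimal strategies $(\theta^*_t)$ produces an admissible integer-valued portfolio starting from $g(0,S_0)$ that super-replicates $\xi_T$, showing that $V^*_t=g(t,S_t)$ is indeed the infimum super-hedging price at each date.

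The two technical points that will demand care are: (i) the propagation of the piecewise affine, \emph{globally Lipschitz} structure from $g(t+1,\cdot)$ to $g(t,\cdot)$. The formula exhibits $g(t,\cdot)$ as a finite minimum (over a deterministic finite subset of $\Z$) of a finite maximum of affine-in-$s$ functions whose slopes are bounded in terms of $L_{t+1}$, $k^d_t$, $k^u_t$ and $|\theta|$, so a finite Lipschitz constant $L_t$ propagates and the induction can be iterated. (ii) The $\cF_t$-measurability of the selected argmin $\theta^*_t$: since the candidate set $[-L_{t+1}-1,L_{t+1}+1]\cap\Z$ is finite and deterministic, and $\theta\mapsto V_t(\theta,S_t)$ is $\cF_t$-measurable for each fixed integer $\theta$, one may select the smallest integer minimizer, which is automatically $\cF_t$-measurable. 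I expect point (i), i.e.\ the recursive control of the Lipschitz constant $L_t$, to be the main subtlety; beyond that, the proof is a direct iteration of Theorem \ref{theoreme principal}.
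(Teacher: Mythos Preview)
Your approach is correct and is exactly what the paper intends: the corollary is stated without a separate proof and is meant to follow by iterating Theorem~\ref{theoreme principal} backwards from $T$ to $0$, which is precisely the induction you outline. Your discussion is in fact more detailed than the paper's treatment, since you make explicit the dynamic programming verification, the propagation of the Lipschitz constant, and the measurable selection of $\theta^*_t$, all of which the paper leaves implicit.
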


The result above provides a recipe to compute backwardly the minimal super-hedging price at any time, which is the main goal of our paper.

\begin{theo} Let $g(t,x,n)$ be the price function as given in Corollary \ref{coroMPS} at time $t$ for $n\ge 1$ units of claims $g_T(S_T)$, i.e. the payoff is $\xi_T=ng_T(S_T)$, where $g_T$ satisfies the conditions of Corollary \ref{coroMPS}. Let $\hat g(t,x)$ be the price function  for one unit of the claim  $g_T(S_T)$   in the model where real-valued strategies are allowed, see \cite{CL}. Suppose that the price process $S$ satisfies the conditions of Corollary \ref{coroMPS}. Then, $\frac{g(t,x,n)}{n}$ converges to $\hat g(t,x)$ as $n\to \infty$.
\end{theo}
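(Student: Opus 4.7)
The plan is a backward induction on $t \in \{0,\dots,T\}$ applied to the normalized function $\tilde g_n(t,s) := g(t,s,n)/n$. The base case is trivial: $\tilde g_n(T,\cdot) = g_T(\cdot) = \hat g(T,\cdot)$, and this already gives uniform convergence on every compact subset of $[0,\infty)$.

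First I would rewrite the dynamic programming recursion of Corollary \ref{coroMPS} in per-unit form. Applied to the terminal payoff $n g_T(S_T)$ and divided by $n$, after the substitution $\eta = \theta/n$ one obtains
\[
\tilde g_n(t,s) \;=\; \min_{\eta \in \frac{1}{n}\Z}\; \sup_{x\in [k_t^d s,\, k_t^u s]}\bigl(\tilde g_n(t+1,x) + \eta(s - x)\bigr),
\]
while the analogous real-valued identity from \cite{CL} reads
\[
\hat g(t,s) \;=\; \inf_{\eta \in \R}\; \sup_{x\in [k_t^d s,\, k_t^u s]}\bigl(\hat g(t+1,x) + \eta(s - x)\bigr).
\]
The two problems share an identical structure; only the admissible set of slopes differs, and $\tfrac{1}{n}\Z$ becomes dense in $\R$ as $n \to \infty$.

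The analytic core is the inductive step. Assuming $\tilde g_n(t+1,\cdot) \to \hat g(t+1,\cdot)$ uniformly on compacts, set $F_n(\eta) := \sup_{x}\bigl(\tilde g_n(t+1,x) + \eta(s-x)\bigr)$ and $F(\eta) := \sup_{x}\bigl(\hat g(t+1,x) + \eta(s-x)\bigr)$, with $x$ ranging in $[k_t^d s, k_t^u s]$. Then $\sup_{\eta}|F_n(\eta) - F(\eta)| \le \varepsilon_n(s) \to 0$ by the inductive hypothesis, and both $F_n, F$ are convex, hence continuous in $\eta$. For the upper bound, pick $\eta^* \in \R$ with $F(\eta^*)$ within $\varepsilon$ of $\inf_\R F$, approximate it by $\eta_n \in \tfrac{1}{n}\Z$ with $|\eta_n - \eta^*|\le 1/(2n)$, and combine uniform continuity of $F$ on a compact containing $\eta^*$ with $|F_n - F|_\infty \le \varepsilon_n(s)$ to get $F_n(\eta_n) \to F(\eta^*)$. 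For the matching lower bound, $\min_{\eta \in \frac{1}{n}\Z} F_n(\eta) \ge \inf_\R F - \varepsilon_n(s)$ is immediate. Letting $n \to \infty$ yields pointwise convergence $\tilde g_n(t,s) \to \hat g(t,s)$.

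The main obstacle is ensuring the minimizing domain for $\eta$ stays compact as $n$ varies, and upgrading pointwise convergence to uniform convergence on compacts (needed for the next step of the induction). For the first point, I would verify by backward induction that the Lipschitz constant $L_t^{(n)}$ of $g(t,\cdot,n)$ scales linearly in $n$: indeed $L_T^{(n)} = n L_T$, and by Theorem \ref{theoreme principal} the optimal integer strategy lies in $[-L_{t+1}^{(n)}-1, L_{t+1}^{(n)}+1]$, whose associated affine functions transfer a Lipschitz constant $\le L_{t+1}^{(n)}+1$ to $g(t,\cdot,n)$; dividing by $n$ yields a uniform bound on $L_t^{(n)}/n$ and simultaneously bounds the range of $\eta^*$ both in $F$ and in $F_n$. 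For the second point, the same equi-Lipschitz property of the family $\{\tilde g_n(t,\cdot)\}_n$ (inherited backward) automatically promotes pointwise convergence to uniform convergence on every compact, closing the induction and yielding the claim at $t=0$.
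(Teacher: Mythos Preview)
Your proof is correct but follows a more laborious route than the paper. The paper avoids backward induction entirely: since integer strategies are a subset of real strategies and the real-valued problem is linear in $n$, the bound $n\hat g(t,x) \le g(t,x,n)$ is immediate; for the reverse inequality, the paper takes the optimal real-valued strategy $(\theta^r_u)_u$ for $n$ units, decomposes each $\theta^r_u = \bar\theta^r_u + d_u$ into integer part plus a remainder with $|d_u|<1$, and bounds $\bigl|\sum_u d_u \Delta S_{u+1}\bigr| \le 2TM$ where $M$ bounds the price process (a consequence of the support hypotheses, for fixed $x$). This yields $g(t,x,n) \le n\hat g(t,x) + 2TM$, hence $\hat g(t,x) \le g(t,x,n)/n \le \hat g(t,x) + 2TM/n$. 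The paper's argument is shorter and produces an explicit $O(1/n)$ rate with a clean constant; your dynamic-programming approach is heavier but works entirely inside the recursion and delivers uniform convergence on compacts as a byproduct. One small imprecision in your sketch: the claimed Lipschitz bound ``$\le L_{t+1}^{(n)}+1$'' for $g(t,\cdot,n)$ omits the $k_t^u$ factors coming from $s\mapsto x_i^{(t+1)}(s)$, though the linear-in-$n$ scaling --- which is all you actually need --- survives.
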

\begin{proof} Note that there exists a constant $M>0$ such that $0\le S_t \le M$ a.s. for all $t=0,\cdots,T$.  By definition, there exists an integer-valued strategy $(\theta^n_t)_{t=0,\cdots,T-1}$ such that $g(t,x,n)+\sum_{t=1}^T\theta^n_{t-1}\Delta S_t\ge ng_T(S_T)$. By definition of the smallest price for $n$ units of claims  with real-valued strategies, we have   $\hat g(t,x,n)=n\hat g(t,x)$ where $\hat g(t,x)=\hat g(t,x,1)$ by linearity of the model \cite{CL}. As it is the smallest price, we deduce that   $n\hat g(t,x)\le g(t,x,n)$, i.e. $\hat g(t,x)\le \frac{g(t,x,n)}{n}$. On the other hand, still by definition, there exists a real-valued strategy $(\theta^r_t)_{t=0,\cdots,T-1}$ such that $n\hat g(t,x)+\sum_{t=1}^T\theta^r_{t-1}\Delta S_t\ge ng_T(S_T)$. We decompose each $\theta^r_{t-1}=\bar\theta^r_{t-1}+d_{t-1}$ into $\bar\theta^r_{t-1}$  the integer part of $\theta^r_{t-1}$ and $d_{t-1}=\theta^r_{t-1}-\bar\theta^r_{t-1}$ the residual term. We observe that $\left |\sum_{t=1}^Td_{t-1}\Delta S_t \right|\le 2TM$. We deduce that $2TM+n\hat g(t,x)+\sum_{t=1}^T\bar\theta^r_{t-1}\Delta S_t\ge ng_T(S_T)$ and we deduce by definition that $g(t,x,n)\le 2TM+n\hat g(t,x)$. We finally get the inequality  $\hat g(t,x)\le \frac{g(t,x,n) }{n}\le \frac{2TM}{n}+\hat g(t,x)$. The conclusion follows.

\end{proof}

The theorem above is interesting for practitioners. It implies that, if $n$ is large enough and if the practitioners only trade integer-valued strategies, then the price of $n$ claims $g_T(S_T)$ is approximately $n\hat g(t,x)$. This is important for numerical reasons as the computation of $g(t,x,n)$ is rather time-consuming contrarily to the computation of $\hat g(t,x)$, see \cite{CL}.

\subsection{Example in the one step problem: the case of the Call option}\label{Ex1}

At time $t=T$, suppose that  the payoff is  $\xi_T^n=ng(S_T)$ where $n\ge 1$ and $g(x)=(x-K)^+$,  $K=500$. We suppose that $\supp_{\mathcal{F}_{T-1}}(S_T)=[k^d S_{T-1}, k^u S_{T-1}]$ for some constants $k^d, k^u$ such that $0<k^d<1<k^u$. Precisely, we suppose that $k^d=0.9$ and $k^u=1.2$. Observe that the super-hedging problem $V_{T-1}+ \theta_{T-1} \Delta S_T \ge n g(S_T)$  is equivalent to  
 \bean V_{T-1} \ge V_{T-1}(\theta_{T-1}) =  \max_{  k \in \{k^d , k^u \} }[n g(k S_{T-1})-\theta_{T-1} k S_{T
 -1}] + \theta_{T-1} S_{T-1}.
 \eean

In the following we give the explicit expression of $V_{T-1}(\theta_{T-1})=V_{T-1}^n(\theta_{T-1})$.
 
 \bigskip
 
   If $k^u \le K/S_{T-1}$, i.e. $S_{T-1}\le K/k^u$, then 
  
  \[V_{T-1}(\theta_{T-1}) = \begin{cases}
\theta_{T-1} S_{T-1}(1-k^u) & \text{if}~~ \theta_{T-1} \le 0, \\
\theta_{T-1} S_{T-1}(1-k^d) & \text{if}~~ \theta_{T-1} \ge 0.
\end{cases}\]

 Therefore,   $\theta_{T-1}^{*,n}(S_{T-1}) = 0$,   and $V_{T-1}^{*,n}(S_{T-1})=V_{T-1}(\theta_{T-1}^{*,n})=0$.

   \bigskip
   
If $k^d \ge K/S_{T-1}$, i.e. $S_{T-1}\ge K/k^d$,

 \[V_{T-1}(\theta_{T-1}) = \begin{cases}
\theta_{T-1} S_{T-1}(1-k^u)+n k^u S_{T-1}- n K & \text{if}~~ \theta_{T-1} \le n, \\
\theta_{T-1} S_{T-1}(1-k^d)+n k^d S_{T-1}- n K & \text{if}~~ \theta_{T-1} \ge n.
\end{cases}\]

 We conclude that $\theta_{T-1}^{*,n}(S_{T-1}) = n$,   and $V_{T-1}^{*,n}(S_{T-1}) =n(S_{T-1}-K)$.
 
 \bigskip
 
If $k^d \le K/S_{T-1} \le k^u$, i.e. $S_{T-1}\in [K/k^u,K/k^d]$,

 \[V_{T-1}(\theta_{T-1}) = \begin{cases}
\theta_{T-1} S_{T-1}(1-k^u)+n k^u S_{T-1}- n K & \text{if}~~ \theta_{T-1} \le \frac{n k^u S_{T-1} -n K }{S_{T-1}(k^u-k^d)}, \\
\theta_{T-1} S_{T-1}(1-k^d) & \text{if} ~~ \theta_{T-1} \ge \frac{n k^u S_{T-1} -n K }{S_{T-1}(k^u-k^d)}.
\end{cases}\]

 Let us define  
 \bean \alpha_{T-1}^n(S_{T-1})&:=&\frac{n k^u S_{T-1} -n K }{S_{T-1}(k^u-k^d)},\\
 f^n(x,S_{T-1})&:=& xS_{T-1}(1-k^d)1_{\{x \geq \alpha_{T-1}(S_{T-1})\}}\\
 &&+(xS_{T-1}(1-k^u)+nk^uS_{T-1}-nK )1_{\{x < \alpha_{T-1}(S_{T-1})\}}.
 \eean
 We denote by $\lfloor \alpha_{T-1}^n(S_{T-1}) \rfloor$ the lower integer part of $ \alpha_{T-1}^n$. Then, 
 \[\theta_{T-1}^{*,n}(S_{T-1}) = \begin{cases}
\lfloor \alpha_{T-1}^n(S_{T-1}) \rfloor  & \text{if} ~~f^n(\lfloor \alpha_{T-1}(S_{T-1}) \rfloor) \le f^n(\lfloor \alpha_{T-1}(S_{T-1}) \rfloor +1), \\
\lfloor \alpha_{T-1}^n(S_{T-1}) \rfloor+1 & \text{otherwise.} 
\end{cases}\] 

 So, we have:
  \bean V_{T-1}^{*,n}(S_{T-1})&=&(\lfloor \alpha_{T-1}^n(S_{T-1}) \rfloor  S_{T-1}(1-k^u)+nk^uS_{T-1}-nK )1_{G_{T-1}^n}(S_{T-1})\\
  &&+(\lfloor \alpha_{T-1}^n \rfloor +1) S_{T-1}(1-k^d)1_{(G_{T-1}^n)^c}(S_{T-1}),
 \eean
where \bean G_{T-1}^n&:=& \{S:~f^n(\lfloor \alpha_{T-1}^n(S) \rfloor) \le f^n(\lfloor \alpha_{T-1}^n(S) \rfloor +1)\}=\{ \lfloor \alpha_{T-1}^n (S)\rfloor \le \beta_{T-1}^n(S)\},\\
\beta_{T-1}^n(S)&:=&\alpha_{T-1}^n(S)+\frac{1-k^d}{k^d-k^u}.
\eean

 A graphic illustration of $V_{T-1}^{*,n}/n$ as a function of $S_{T-1}$ is given in Figure \ref{F}.
 
  \begin{figure}[!h]
\begin{center}
\includegraphics[scale=0.4]{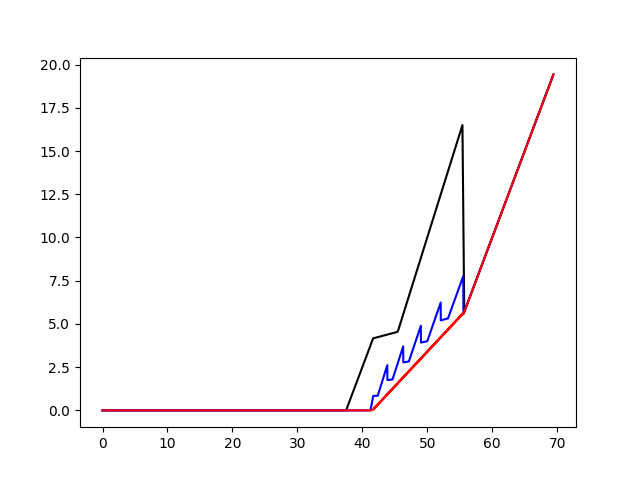} 
\caption{The function $x=S_{T-1}\mapsto g(T-1,x,n)/n=V_{T-1}^{*,n}/n$, $K=500$, for $n=1$ (black), $n=5$ (blue), $n=100$ (red).}
\label{F}
 \end{center} 
 \end{figure}
 
 We observe that $V_{T-1}^{*,n}$ is not a convex function of $S_{T-1}$ even if the payoff function is and, moreover, $g(T-1,x,n)\ne n g(T-1,x,1)$.

 \newpage

\section{Numerical illustration}

In this section, we illustrate the method developed above when the underlying asset $S$ is the french CAC 40 index and the European claim is the Call option. The historical data is composed of daily observations of the CAC 40 values between the  6th of June 2019    and the 16th of June 2021. We use the two first years of the data set  to calibrate the model while we implement the model on the third year. Here, we suppose that $\supp_{\cF_t}S_{t+1}=[k^d_tS_t, k^u_tS_t]$ where $k^d_t$ and $k^u_t$ are estimated as follows:

\bean k^d_t&=&\min_{i=j,\cdots,N} S_{t+1}^{(j)}/S_t^{(j)},\\
k^u_t&=&\max_{i=j,\cdots,N} S_{t+1}^{(j)}/S_t^{(j)},
\eean
where $N$ is the number of training periods and $S_t^{(j)}$ are the observed values at time $t$ during the j-th periods. The algorithms are written in Python. The main difficulty is to write a code whose execution time is reasonable. Indeed, recall  that the price function $g(t,x)$ is computed backwardly from $g(t+1,x)$. If this function $g(t,x)$  is naively  coded from $g(t+1,x)$, then the computation  may take more than two weeks ! So it is better to approximate, at each step, the function $g(t,x)$ as a numpy array consisting of discretized values following a grid $(x_i)_{i=0}^{N_t}$ where $x_i=step* i$. Here, we choose $step=0.1$ and $N_t$ is chosen so that $x_{N_0}\le S_0^{{\rm max}}$ where $S_0^{{\rm max}}$ is the maximal value for $S_0$ that we observe in our data. At last, $x_{N_t}\le S_0^{{\rm max}}*(\max_{r\le t}k^u_r)^t$. 

The relative hedging error is defined as $\epsilon_T= 100*(V_T-g(T,S_T))/S_T$ where $(V_t^*)_{t=0,\cdots,T}$ is the optimal super-hedging portfolio process whose initial value is the minimal super-hedging price, as computed in the last section. We present in Figure \ref{F1}.1  the distribution of $\epsilon_T$ when $n=1$.  Of course, we expect that  $\epsilon_T\ge 0$ a.s. and this is confirmed on our test data set. Note that, we could have observed some negative values as the model is  calibrated from data values anterior to the test data set.

 \begin{figure}[!h]
\centering
\begin{minipage}{0.4\textwidth}
\includegraphics[width=\textwidth]{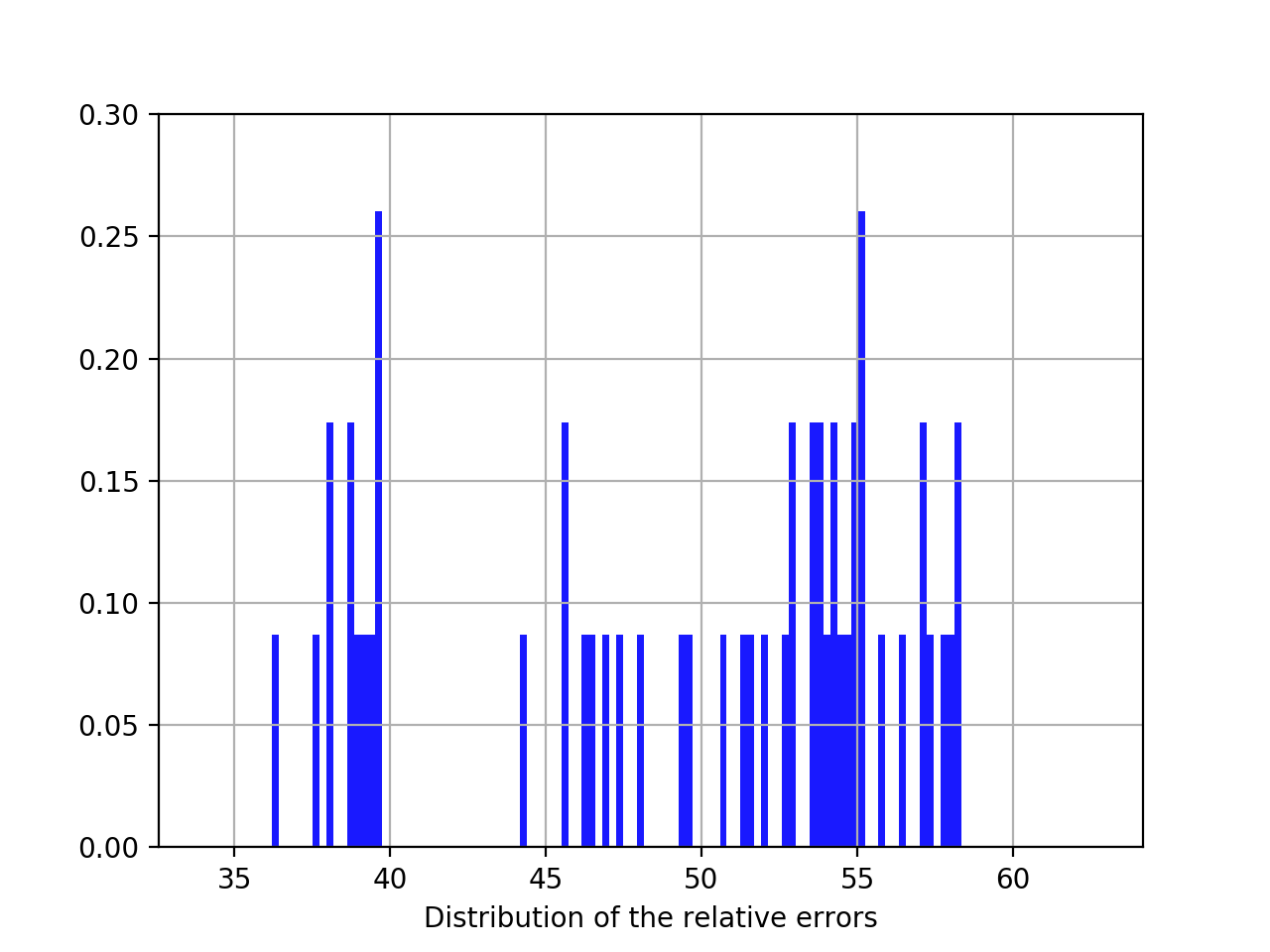}

\label{F1}
\end{minipage}
\begin{minipage}{0.4\textwidth}
\includegraphics[width=\textwidth]{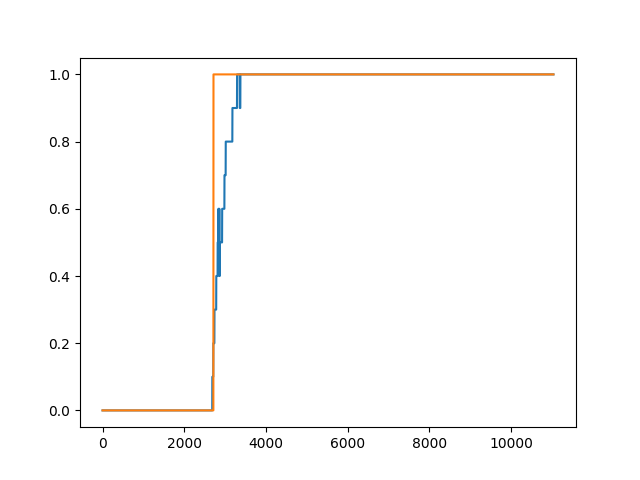}
\end{minipage}
\caption{F2.1: Super-hedging errors with  $n=1$ and $K=3000$. F2.2: Comparison of the optimal strategies per unit of claims for $n=1$ and $n=10$. }
\label{F1}
\end{figure}

 Let us denote by $g(t,x,n)$ the price function at time $t$ of  the optimal portfolio  process, i.e. $V_t^{*,n}=g(t,S_t,n)$ such that $V_T^{*,n}\ge \xi_T^n$ a.s.,  when the European claim is $\xi_T^n:=n*(S_T-K)^+$. The natural question is the following: Do we have $g(t,x,n)=ng(t,x,1)$ ? The answer is yes when real-valued strategies are allowed since the  hedging problem is then linear with respect to the number of claims.

In the case of integer-valued strategies, the answer is not trivial. It  is actually negative, see the first example above. By definition of the  infimum super-hedging price,
we have $g(t,x,n)\le ng(t,x,1)$. As a first step, we have computed the relative infimum super-hedging prices per unit of claims, i.e. $V_0^{*,n}/n$ at time $0$, for different values of $n$ on each period of the test data set. Then, computing the average of the $V_0^{*,n}/n$ values over all the periods, we get that the empirical average of $V_0^{*,n}/n$ is approximately equal to $49.48\%$ for $n=1,5,10,15,20$.  Nevertheless, we observe that the price function per unit of unit of claims, i.e. $g(0,S_0,n)/n$ is non-increasing when $n$ increases, see Figure \ref{F3}. This implies that the equality $g(t,x,n)= ng(t,x,1)$ does not hold.  In Figure \ref{F3}, we clearly observe the convergence of $x\mapsto g(0,x,n)/n$ as $n\to \infty$.

The same question arises for the optimal strategy associated to $V^{*,n}$, i.e. do we have 
$\theta^*(t,S_t,n)=n \theta^*(t,S_t,1)$? Intuitively, this is a priori not the case as $\theta^*(t,S_t,1)=\theta^*(t,S_t,n)/n$ could be not integer-valued. This is confirmed at time $0$ when we compute the optimal strategy $\theta^*(0,S_0,n)/n$  per unit of claims. This is illustrated by Figure \ref{F1}.2   where we compare $\theta^*(0,S_0,n)/n$ for $n=10$ to $\theta^*(0,S_0,1)$. We may observe that the optimal strategy per unit of claims $\theta^*(0,S_0,n)/n$ (blue graph) is smaller that $\theta^*(0,S_0,1)$ for $n=10$.

\begin{figure}[h!]
\includegraphics[scale=.4]{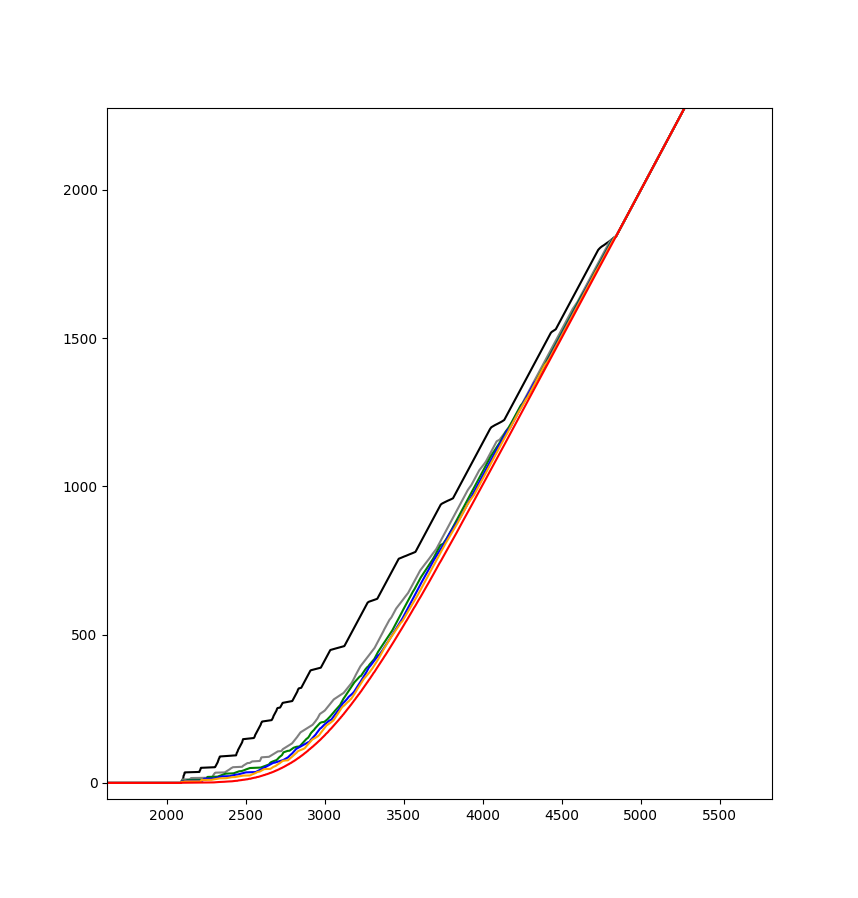} 
\caption{Super-hedging price mapping $x\mapsto g(0,x,n)/n$ of $n$ units of call option per unit of claims for different values of $n=1$ (black), $n=3$ (grey), $n=5$ (green), $n=7$ (blue), $n=10$ (orange), $n=100$ (red).}
\label{F3}
 \end{figure} \smallskip

\bigskip

\newpage

\end{document}